\documentclass[aps,prd,twocolumn]{revtex4-1}
\usepackage{enumerate}
\usepackage{amsmath, amsthm, amssymb,commath,braket}
\usepackage{color,calc,graphicx,subfig}

\def\be{\begin{equation}} 
\def\ee{\end{equation}}
\def\bea{\begin{eqnarray}}
\def\eea{\end{eqnarray}}
\def\bma{\begin{mathletters}}
\def\ema{\end{mathletters}}
\def\P{{\cal P}}
\def\0{\overline{0}}
\def\tr{\mbox{tr}}

\def\q0{\underline{0}}

\def\H{{\cal H}}

\def\P{{\mathbb P}}
\def\S{{\mathbb S}}
\def\C{\mathbb{C}}
\def\id{{\mathbb I}}

\def\H{{\cal H}}
\def\U{\mathfrak{U}}

\def\N{\mathbb{N}}

\def\tr{\mbox{tr}}
\def\one{\leavevmode\hbox{\small1\normalsize\kern-.33em1}}

\def\bra#1{\langle#1|} \def\ket#1{|#1\rangle}
\def\braket#1#2{\langle#1|#2\rangle}

\def\proj#1{\ket{#1}\!\bra{#1}}

\newtheorem{theo}{Theorem}

\newtheorem{defin}[theo]{Definition}

\newtheorem{lemma}[theo]{Lemma}

\def\id{{\mathbb I}}
\def\tr{\mbox{tr}}

\begin{document}
\title{Composing decoherence functionals}
\author{Paul Bo\"{e}s,$^1$ and Miguel Navascu\'es$^2$}
\affiliation{$^1$Dahlem Center for Complex Quantum Systems, Freie Universit\"{a}t Berlin, D-14195 Berlin, Germany\\$^2$Institute for Quantum Optics and Quantum Information (IQOQI) Vienna, Austrian Academy of Sciences, Boltzmanngasse 3, 1090 Vienna, Austria}

\begin{abstract}
Quantum Measure Theory (QMT) is a generalization of quantum theory where physical predictions are computed from a matrix known as \emph{decoherence functional} (DF). Previous works have noted that, in its original formulation, QMT exhibits a problem with composability, since the composition of two decoherence functionals is, in general, not a valid decoherence functional. This does not occur when the DFs in question happen to be positive semidefinite (a condition known as strong positivity). In this paper, we study the concept of composability of DFs and its consequences for QMT. Firstly, we show that the problem of composability is much deeper than originally envisaged, since, for any $n$, there exists a DF that can co-exist with $n-1$ copies of itself, but not with $n$. Secondly, we prove that the set of strongly positive DFs cannot be enlarged while remaining closed under composition. Furthermore, any closed set of DFs containing all quantum DFs can only contain strongly positive DFs.

\end{abstract}
\date{}
\maketitle
\section{Introduction}

Despite the continuous efforts of numerous scientists, reconciling General Relativity with Quantum Theory remains one of the most important open problems in Physics. The framework of General Relativity suggests that one promising approach to such a unification will be by means of a reformulation of quantum theory in terms of \emph{histories} rather than \emph{states}. Following this idea, Hartle \cite{hartle1,hartle2}, and, independently, Sorkin \cite{sorkin1,sorkin2,sorkin3}, have proposed a history-based framework, which can accommodate both standard quantum mechanics as well as physical theories beyond the quantum formalism. In this framework, dubbed Generalized Quantum Mechanics or Quantum Measure Theory, a fundamental role is played by the so-called decoherence functional (DF). This object not only determines which questions can be answered regarding the history of a closed quantum system, but it also assigns probabilities to each possible answer. Previous works on Quantum Measure Theory (QMT) have discussed the interpretation of the DF \cite{sorkin3} (still an active topic in standard quantum theory, see \cite{open}), proven the equivalence between the DF and operator-based formulation of quantum mechanics \cite{hilbert} and established a link with the operational axiomatization of quantum theory \cite{henson}. In this last regard, the authors of \cite{henson} argue that, when we try to model nonlocality experiments within the framework of DFs, the almost quantum set of correlations naturally appears \cite{almost}. The almost quantum set is a set of nonlocal correlations that, despite being strictly larger than the quantum one, seems to satisfy all information-theoretic principles originally conceived to single out quantum nonlocality \cite{prin1, prin2, prin3, prin4, prin5, prin6}. This feature, together with its unexpected role in QMT led the authors of \cite{almost} to conjecture that the almost quantum set represents the set of correlations of a yet-to-be-discovered consistent physical theory, arguably more plausible than quantum mechanics itself.

There are, though, some caveats. Some of the above results are derived under the assumption that, when viewed as matrices, DFs must be positive semidefinite, a condition known as \emph{strong positivity} (SP). As noted in \cite{random}, the composition of two DFs is not necessarily a DF. The set of strongly positive DFs does not present such patologies, and so the authors of \cite{random,hilbert} suggest that the axiom of SP must be adopted on this basis.

In this paper we study the composability properties of DFs, in an attempt to put this last suggestion on a firmer ground. By postulating a very natural rule for the composition of DFs corresponding to independent systems, we will derive two results on the composability of DFs. Firstly, we will argue that the problem of the composability of DFs is much deeper than originally noted, by showing that, for any $n$, there exists a DF such that $n$ copies of it can co-exist simultaneously, but not $n+1$ copies. Secondly, we will prove that no theory that is closed under composition and that extends the set of DFs of quantum theory can contain DFs which are not SP. Consequently, by \cite{henson}, the set of nonlocal correlations of such a theory must be contained in the almost quantum set.

The structure of this paper is as follows: In Section \ref{QMT}, we will introduce QMT and explain how it relates to the decoherence histories interpretation of quantum theory. We will explain how to model Bell experiments in this framework and review the work of \cite{henson}. We will also formulate a product rule for the composition of DFs describing independent physical systems. In Section \ref{results}, we will explore the consequences of the product rule. We will present our conclusions in Section \ref{conclusion}.


\section{Quantum Measure Theory}
\label{QMT}

The basic idea behind Quantum Measure Theory, or Generalized Quantum Mechanics, for that matter, is to provide a description of the world in terms of \emph{histories}. A history is a classical description of the system under consideration for a given period of time, finite or infinite. If we are trying to describe a system of $N$ particles, then a history will be given by $N$ classical trajectories. If we are working with a field theory, then a history will correspond to the spatial configuration of the field as a function of time. In either case, QMT tries to provide a way to describe the world through classical histories by extending the notion of probability theory (which, in the light of recent experiments \cite{bell_exp1, bell_exp2, bell_exp3}, is clearly not rich enough to model our universe). QMT does so via an object dubbed \emph{decoherence functional} (DF).

Call $\Omega$ the set of possible histories of the system; and $\U$, the set of all subsets of $\Omega$. A DF is a function $D:\U\times\U\to \C$, with the following properties:

\begin{enumerate}
\item{\emph{Hermiticity.}}
$D(A|B)=D(B|A)^*$.
\item{\emph{Linearity.}}
$D(A\cup B|C)=D(A|C)+D(B|C)$, if $A\cap B=\emptyset$.
\item{\emph{Positivity}}
$D(A|A)\geq 0$.
\item{\emph{Normalization.}}
$D(\Omega,\Omega)=1$.
\item{\emph{Weak decoherence.}}
Let $\{A_k\}_{k}$ be a partition of $\Omega$. If there exists a feasible experiment to determine which $A_k$ describes the system, then 

\be
\mbox{Re}\{D(A_k|A_j)\}=P(A_k)\delta_{kj},
\label{decoherence}
\ee

\noindent where $P(A_i)$ is the probability that the history of the system belongs to $A_i$.
\end{enumerate}

The delta appearing in eq. (\ref{decoherence}) is required by consistency. Indeed, one would imagine that, if there existed an experiment capable of distinguishing between the elements of the partition $\{A_i\}_i$, then, for $k\not=j$, there should be an experiment distinguishing the elements of the coarse-grained partition $\{A_i:i\not=j,k\}\cup \{A_k\cup A_j\}$, with $D(A_k\cup A_j|A_k\cup A_j)=P(A_k\cup A_j)=P(A_k)+P(A_j)=D(A_k|A_k)+D(A_j|A_j)$. This last condition and the axioms of linearity and hermiticity imply that $\mbox{Re}\{D(A_k|A_j)\}=0$. 

The difference between QMT and Generalized Quantum Mechanics is that the latter also posits the converse of the Decoherence Axiom. Namely, that, for any partition $\{A_i\}_i$ of $\Omega$ satisfying $\mbox{Re}\{D(A_k|A_j)\}=0$ for $k\not=j$, there exists a feasible experiment to determine which partition the system is in, and the probabilities of each outcome are given by eq. (\ref{decoherence}) \cite{DF}. In \cite{DF} it is observed that, in many situations of physical interest, not only the real part of $D(A_k|A_j)$ vanishes, but also the imaginary part. This has led some authors \cite{DF,hilbert,henson} to postulate the stronger axiom

\begin{enumerate}
\setcounter{enumi}{4}
\item{\emph{Strong decoherence.}}
Let $\{A_k\}_{k}$ be a partition of $\Omega$. If there exists a feasible experiment to determine which $A_k$ describes the system, then 

\be
D(A_k|A_j)=P(A_k)\delta_{kj},
\label{decoherence}
\ee

\noindent where $P(A_i)$ is the probability that the history of the system belongs to $A_i$.
\end{enumerate}

Now, to clarify the connection between DFs and reality we will next model a Bell experiment within the DF framework as in \cite{henson}. Let two parties, call them Alice and Bob, conduct separate experiments labeled by $x\in\{1,..,m\}$ and $y\in\{1,..,m\}$, thus obtaining outcomes $a\in\{1,...,d\}$, and $b\in\{1,...,d\}$. They observe that the statistics of their experiments are governed by the list of probabilities $\{P(a,b|x,y)\}_{x,y,a,b}$. For simplicity, we will call this list \emph{behavior} and just denote it by $P(a,b|x,y)$.

To model this experiment within the framework of DFs, call $a_x$ ($b_y$) the property of the system revealed when Alice (Bob) conducts the experiment $x$ ($y$). We regard as a history any list $\omega\equiv(a_1,...,a_n,b_1,...,b_n)\in\{1,...,d\}^{2n}$ of all the properties of the system. Now, for fixed $x,y$, consider the partition $\{A_{x,y,a,b}\}_{a,b}$ of $\U$ given by the sets $A_{x,y,a,b}=\{\omega:\omega(a_x)=a,\omega(b_y)=b\}$. Which element of this partition the history of our system falls into is a question that can be answered just by forcing Alice and Bob to conduct measurements $x,y$. Hence the elements of the partition must decohere, i.e., $D(A_{x,y,a,b}|A_{x,y,a',b'})=P(a,b|x,y)\delta_{a,a'}\delta_{b,b'}$ (assuming strong decoherence). Similarly, the act of Alice conducting an experiment $x$ and asking Bob to conduct experiment $y$ depending on her outcome $a$ (and viceversa) determines further decohering partitions. Note that, given $P(a,b|x,y)$, this reasoning does not allow us to fill all the entries of the DF, but to establish affine relations between them.

If we wanted to model this experiment within standard quantum mechanics, we would need to assume that there exists a Hilbert space $\H$, a normalized quantum state $\rho\in B(\H)$ and projector operators $E(x,a),F(y,b)\in B(\H)$ satisfying $\sum_{a}E(x,a)=\sum_{b}F(y,b)=\id$, $[E(x,a),F(y,b)]=0$ such that $P(a,b|x,y)=\tr(\rho E(x,a) F(b,y))$. Then, a possible \emph{quantum DF} to describe this system follows from linearity from the following prescription:

\begin{align}
&D((a_1,...,a_n,b_1,...,b_n)|(a'_1,...,a'_n,b'_1,...,b'_n))=\nonumber\\
&\tr\left\{E(\vec{a})F(\vec{b})\rho F(\vec{b}')^\dagger E(\vec{a}')^\dagger\right\}, \label{eq:quantum}
\end{align}

\noindent 
where $E(\vec{a})\equiv\prod_{x=1}^nE(x,a_k)$, $F(\vec{b})\equiv\prod_{y=1}^nF(y,b_k)$.

Sometimes it is more convenient to represent the decoherence functional as an operator. Given the set of all histories $\Omega$, one does so by defining a Hilbert space by assigning to each $a\in\Omega$ an element $\ket{a}$ of an orthonormal basis $\{\ket{b}:b\in\Omega\}$. Next, to any $A\in \U$ we associate the vector $\ket{A}=\sum_{a\in A}\ket{a}$. Finally, for any DF, we can build the matrix $D=\sum_{a,b\in \Omega}D(\{a\}|\{b\})\ket{a}\bra{b}$. By the axiom of linearity we then have that $D(A|B)=\bra{A}D\ket{B}$. 

In this matrix representation, we can thus drop linearity and the remaining axioms can be expressed as:

\begin{enumerate}
\item{\emph{Hermiticity.}}
$D=D^\dagger$.
\item{\emph{Positivity.}}
$\bra{V}D\ket{V}\geq 0$, for all vectors $\ket{V}\in\{0,1\}^{|\Omega|}$.
\item{\emph{Normalization.}}
$\bra{\Omega}D\ket{\Omega}=1$.
\item{\emph{Weak (Strong) decoherence.}}
Let $\{A_k\}_{k}$ be a partition of $\Omega$. If there exists a feasible experiment to determine which $A_k$ describes the system, then 

\be
\mbox{Re}\{\bra{A_j}D\ket{A_k}\}\mbox{ }\left(\bra{A_j}D\ket{A_k}\right)=P(A_k)\delta_{kj},
\label{decoherence}
\ee

\noindent where $P(A_i)$ is the probability that the history of the system belongs to $A_i$.
\end{enumerate}

\noindent A DF is \emph{strongly positive} if its matrix representation is positive semi-definite, i.e. if all its eigenvalues are non-negative. As illustrated in the example \eqref{eq:quantum}, all quantum DFs are strongly positive \cite{random}. However, the set of all behaviors $P(a,b|x,y)$ admitting a strongly positive DF is not the quantum set, but the set of \emph{almost quantum correlations} \cite{henson}, defined in \cite{almost}.

Now, consider two physical systems with history sets $\Omega_1,\Omega_2$ and decoherence functionals $D_1:\U_1\times\U_1\to \C$, $D_2:\U_2\times\U_2\to \C$. Since the decoherence functional is the basic object of QMT, there should exist a prescription that assigns a joint DF $D_{12}: \U_{12} \times \U_{12}\to \C$ to the joint system $1-2$. Here $\U_{12}$ denotes the power set of the history space $\Omega_{12} = \Omega_1 \times \Omega_2$. 

Taking inspiration from the quantum case, we suggest the following composition rule, assumed implicitly in \cite{random}:
\begin{defin}{\emph{Product rule}}

Let $D_1, D_2$ be DFs describing independent systems, with history sets $\Omega_1,\Omega_2$. Then the joint DF $D_{12}:\U_{12}\times\U_{12}\to \C$ describing systems $1$ and $2$ is completely determined by linearity and the relations:

\be
D_{12}(A_1\times A_2|B_1 \times B_2)=D_1(A_1|B_1)D_2(A_2|B_2),
\ee

\noindent for all $A_1,B_1\in \U_1,A_2,B_2\in \U_2$.
\end{defin}

\noindent Note that, in operator representation, the product rule reduces to $D_{12}=D_1\otimes D_2$. 

As noted by Di\'osi, under the product rule the composition of any two DFs violating strong decoherence results in a matrix that violates the axiom of \emph{weak} decoherence \cite{diosi}. Thus, unless the system under study is literally \emph{singular}, we can safely assume that measurable partitions of $\Omega$ satisfy strong decoherence. Di\'osi's composition argument hence settled the controversy on whether the axiom of weak or strong decoherence was to be preferred.

In \cite{random,hilbert}, the authors further consider the possibility that the composition of two DFs via the product rule give rise to a matrix violating the positivity axiom. That such pathological examples do exist follows from the recent results of Henson \cite{henson2}, which imply that there exists a behavior $P(a,b|x,y)$ (specifically the Popescu-Rohrlich box \cite{pr}), such that one copy admits a DFs but not two independent copies.

However, as pointed out in \cite{random,hilbert,henson}, strongly positive DFs do not exhibit this behavior. Indeed, for $D_1,D_2\geq 0$, it follows straightforwardly that $D_{12}=D_1\otimes D_2\geq0$. In other words: the set $\S$ of strongly positive DFs is \emph{closed under composition}.

A set ${\cal S}$ of DFs is closed under composition iff, for any $D, D'\in {\cal S}$, $D\otimes D'\in {\cal S}$. Intuitively, any closed system described by such a set can in principle co-exist with any other closed systems subject to the same constraint. This is very desirable from a physical point of view, as it allows us to model a closed system independently of the rest of the universe. 

By adopting the product rule and requiring closure under composition, the authors of \cite{random, hilbert, henson} motivate the axiom of strong positivity. However, this is only true to some extent, because $\mathbb{S}$ is not the only closed set within the set of DFs and it is not clear whether all other closed sets are subsets of $\mathbb{S}$. For all we know, we could weaken the SP condition and yet end up with a closed set of DFs, possibly describing nonlocal correlations beyond the almost quantum set.

Moreover, it may be argued that the requirement of closure under composition is unnecessarily strong, given that the universe may just hold a finite number of independent systems. In the following section both of these questions are addressed.

\section{Consequences of the product rule}
\label{results}

Is requiring that the set of admissible DFs of a physical theory is closed under composition an unnecessarily strong assumption? One possible argument \emph{in favour of} this requirement follows from the observation in \cite{henson2} that there exists a DF $D$ such that $D^{\otimes 2}$ is not a DF. Hence, in order to postulate that a given system is described by $D$, we would need to be certain that $D$ does not describe any other system \emph{in the whole universe}. By admitting the feasibility of $D$, we would be thus renouncing to the possibility of modeling closed systems independently of the rest of the world.

This argument, however, relies on the intuition that, for cosmic scales, $2$ is a very small number. Thus, an argument \emph{against} the requirement of closure may counter that such a reasoning cannot hold for astronomic numbers of copies: what if a very high number of copies of $D$, say $10^{200}$ -many more than what the Universe can actually accommodate-, were compatible? Would it be then reasonable to conclude that $D$ cannot be realized just because $10^{200}+1$ copies cannot theoretically co-exist? The next result shows that the above is a valid concern:

\begin{lemma}
Let $n\in \N$. Then, there exists a decoherence functional $D$ such that $D^{\otimes n}$ is a decoherence functional, but $D^{\otimes n+1}$ is not.
\end{lemma}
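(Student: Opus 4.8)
The plan is to reduce the statement to a property of behaviors $P(a,b|x,y)$ and their realizability by decoherence functionals, exploiting the link between DFs and nonlocal correlations reviewed above together with the observation of \cite{henson2}. Concretely, I would look for a behavior $P_n(a,b|x,y)$ — most naturally a convex mixture $\lambda \mathrm{PR} + (1-\lambda)\mathrm{N}$ of a Popescu--Rohrlich box with some noise/local behavior, with $\lambda=\lambda(n)$ tuned appropriately — such that $n$ independent copies of $P_n$ still admit a (jointly consistent) DF while $n+1$ copies do not. The DF for a single run of a Bell scenario only constrains the entries of $D$ through affine relations, and by the product rule $k$ independent copies are described by $D^{\otimes k}$; so ``$k$ copies co-exist'' becomes the statement that the $k$-fold product behavior $P_n^{\times k}$ is itself realizable by some DF. The key quantitative input is a \emph{threshold}: there should be a critical noise level below which PR-type boxes fail to be DF-realizable upon composition (this is the content of Henson's result for the pure PR box), and one wants to sit just inside that threshold for $k=n$ and just outside for $k=n+1$.

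The key steps, in order, would be: (i) recall precisely from \cite{henson2} the obstruction — i.e., the criterion (presumably an almost-quantum / DF-realizability bound, something like a constraint on a semidefinite relaxation) that rules out two independent PR boxes; (ii) show this criterion degrades monotonically and quantitatively under tensoring, so that the ``DF-realizability margin'' of $P^{\times k}$ is a strictly decreasing (and computable) function of $k$ for a noisy PR box; (iii) identify, for each $n$, a noise parameter $\lambda_n$ such that $P_{\lambda_n}^{\times n}$ lies on the realizable side and $P_{\lambda_n}^{\times (n+1)}$ on the non-realizable side — this is an intermediate-value / interpolation argument once (ii) gives the right monotone family; (iv) build an explicit DF for one copy of $P_{\lambda_n}$ whose $n$-fold tensor power is still a valid DF (for the ``possible'' direction), and invoke the criterion from (i)--(ii) to certify that the $(n{+}1)$-fold power violates positivity (for the ``impossible'' direction). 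A cleaner route, if available, is to construct $D$ directly as a small explicit matrix (e.g. on a two-element or four-element history set) so that $D^{\otimes k}\ge$ the positivity/decoherence axioms can be checked by hand as a function of $k$, bypassing the Bell-scenario language entirely; I would attempt both and keep whichever is shorter.

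The main obstacle I anticipate is step (iii): getting the threshold to fall \emph{strictly between} $n$ and $n+1$ rather than merely ``between some $k$ and $k+1$'' requires enough control over the realizability boundary to solve for $\lambda_n$ exactly, and the boundary is defined by a semidefinite feasibility condition whose behavior under tensor powers is not obviously given by a clean closed form. In particular one must rule out the degenerate possibility that, for the chosen family, realizability is preserved under \emph{all} tensor powers (too much noise) or fails already at the first power (too little) — i.e. one needs the crossover to be genuinely tunable and to land on every positive integer. I expect this to hinge on showing that the relevant certificate (an eigenvalue of $D^{\otimes k}$ restricted to the affine subspace fixed by the decoherence constraints, or the optimal value of an associated SDP) behaves like $f(k) = c\,r^k + \text{(lower order)}$ with $0<r<1$, so that the sign change is simple and its location can be dialed by the single parameter $\lambda$; pinning down such an asymptotic, or replacing it by an exact finite-dimensional computation, will be the technical heart of the argument.
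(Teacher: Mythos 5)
There is a genuine gap: what you have written is a research plan, not a proof, and the step you yourself identify as the ``technical heart'' --- step (iii), placing the realizability crossover exactly between $n$ and $n+1$ --- is never carried out. Nothing in the proposal establishes that the DF-realizability of a noisy PR family degrades under tensor powers in the controlled, monotone way you posit (your hoped-for $f(k)=c\,r^k$ behaviour is unsupported); Henson's obstruction concerns the extremal PR box and two copies, and extending it to a quantitative threshold in the noise parameter valid for arbitrary powers $k$ would itself be a substantial new result. There is also a conceptual mismatch with the statement to be proved: the lemma asks for a \emph{single fixed} $D$ such that $D^{\otimes n}$ is a DF while $D^{\otimes(n+1)}$ is not. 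Realizability of the product behavior $P^{\times n}$ by \emph{some} DF does not give you that $D^{\otimes n}$ is a DF for your chosen single-copy $D$; you acknowledge this in step (iv) (``build an explicit DF for one copy whose $n$-fold tensor power is still valid''), but that construction --- which is the actual content of the lemma --- is exactly what is missing.

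The paper takes the route you mention only in passing as the ``cleaner'' alternative, and it is indeed where all the work lies: an explicit $4\times 4$ matrix $D=\tfrac{1}{2}\epsilon A\otimes\proj{0}+\tfrac{1}{2}(\id-\epsilon A)\otimes\proj{1}$ with $A=\left(\begin{smallmatrix}1&\lambda\\\lambda&1\end{smallmatrix}\right)$, $\lambda>1$, and $\epsilon=1/(\lambda^{n+1/2}+1)$. A specific $0/1$ vector $\ket{V}=\ket{0,0}^{\otimes n}\ket{0}\ket{1}+\ket{1,0}^{\otimes n}\ket{1}\ket{1}$ gives $\bra{V}D^{\otimes n+1}\ket{V}=\tfrac{1}{2^n}\epsilon^n[1-\epsilon(1+\lambda^{n+1})]<0$, killing positivity at $n+1$ copies, while a norm estimate ($\bra{\tilde W}A^{\otimes n_1}\otimes B^{\otimes n_2}\ket{\tilde W}\geq 1-\|A^{\otimes n_1}\otimes(B^{\otimes n_2}-\id)\|_\infty$ with $B=\id-\epsilon A$, taking $\lambda$ large) certifies that $D^{\otimes n}$ still satisfies the positivity axiom. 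The two-parameter family $(\epsilon,\lambda)$ is what makes the crossover land exactly at $n$, replacing the SDP-threshold analysis you were hoping for by an elementary hand computation; without either that explicit construction or a worked-out version of your steps (ii)--(iv), the proposal does not prove the lemma.
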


\begin{proof}
Consider the $4\times 4$ matrix

\be
D\equiv \frac{1}{2}\epsilon A\otimes \proj{0}+\frac{1}{2}(\id-\epsilon A)\otimes \proj{1},
\ee
\noindent with 

\be
A=\left(\begin{array}{cc}1&\lambda\\\lambda&1\end{array}\right),
\ee

\noindent for some $\lambda>1$. 

As long as $\epsilon\leq \frac{1}{1+\lambda}$, this can be interpreted as a weakly positive decoherence functional for two non-compatible measurements with dichotomic outcomes. Indeed, denote each history by a vector $(a,b)\in\{0,1\}^2$ and define $D(a,b|a',b')\equiv\bra{a}\bra{b} D\ket{a'}\ket{b'}$. Then it can be verified that the partitions $\{A_a\}_a$ and $\{B_b\}_b$, with $A^a=\{(a,b):b=0,1\}$, $B^b=\{(a,b):a=0,1\}$ strongly decohere. Moreover, $\bra{C}D\ket{C}\geq 0$ for any subset $C$ of $\Omega=\{(a,b):a,b=0,1\}$.

Now, consider the element of $\U_{1,...,n+1}$ given by the vector $\ket{V}=\ket{0,0}^{\otimes n}\ket{0}\ket{1}+\ket{1,0}^{\otimes n}\ket{1}\ket{1}$. One can verify that $\bra{V}D^{\otimes n+1}\ket{V}=\frac{1}{2^{n}}\epsilon^n[1-\epsilon(1+\lambda^{n+1})]$. Hence, for $\epsilon> 1/(\lambda^{n+1}+1)$, $\bra{V}D^{\otimes n+1}\ket{V}<0$, and so $D^{\otimes n+1}$ is not a DF. In particular, taking $\epsilon=1/(\lambda^{n+1/2}+1)$ we make sure that the positivity axiom is violated.

It just rests to see that $\lambda$ can be chosen such that $D^{\otimes n}$ is a decoherence functional. Note that, due to the block-diagonal structure of $D$, it is enough to show that $\bra{W}A^{\otimes n_1}\otimes B^{\otimes n_2}\ket{W}\geq 0$ for any $\ket{W}\in\{0,1\}^{2^n}, n_1+n_2=n$, with $B=\id-\epsilon A$. 

Define $\ket{\tilde{W}}=\ket{W}/\|\ket{W}\|$. Then,

\begin{align}
&\bra{\tilde{W}}A^{\otimes n_1}\otimes B^{\otimes n_2}\ket{\tilde{W}}\nonumber\\
&=\bra{\tilde{W}}A^{\otimes n_1}\otimes \id\ket{\tilde{W}}+\bra{\tilde{W}}A^{\otimes n_1}\otimes (B^{\otimes n_2}-\id)\ket{\tilde{W}}\nonumber\\
&\geq 1-\|A^{\otimes n_1}\otimes (B^{\otimes n_2}-\id)\|_\infty\nonumber\\
&=1-(1+\lambda)^{n_1}\left\{\left(1-\frac{\lambda+1}{\lambda^{n+1/2}+1}\right)^{n_1}-1\right\}.
\label{positivity}
\end{align}

\noindent Here the inequality $\bra{\tilde{w}}A^{\otimes n_1}\otimes \id\ket{\tilde{w}}\geq 1$ follows from the fact that $A=\id+\tilde{A}$, with all the entries of $\tilde{A}$, $\ket{\tilde{W}}$ being non-negative.

We claim that, for any $n$ and $\lambda$ sufficiently high, the right hand side of eq. (\ref{positivity}) is arbitrarily close to $1$ for all $n_1+n_2=n$, $n_2\geq 1$. To see this, one just needs to take the limit $\lambda\to \infty$ in that expression. Hence $\bra{\tilde{W}}A^{\otimes n_1}\otimes B^{\otimes n_2}\ket{\tilde{W}}\geq 0$ for all $n_1+n_2=n$ (the case $n_2=0$ is trivial) and so $D^{\otimes n}$ is a DF.

\end{proof}

In the light of this result, whether the requirement of closure under composition is unnecessarily strong or not is a question that cannot be settled via arguments involving small numbers of independent systems. 

\vspace{10pt}

Suppose now that, despite the previous lemma, we adopt as a physical principle that DFs must be composable arbitrarily many times. This would imply that the set ${\cal S}$ of DFs deemed physical must be closed under the product rule. That is, if $D_1,D_2\in {\cal S}$, then $D_1\otimes D_2\in {\cal S}$. In order to meet this requirement, it has been proposed \cite{henson,hilbert} that ${\cal S}$ should be contained in the set $\S$ of strongly positive DFs.

In the following we will show that, under the product rule, $\S$ is maximal, i.e., it cannot be enlarged without losing the property of being closed under composition. Actually, we will prove an even stronger result: namely, that any set of DFs containing the quantum set cannot be both closed under composition and have a non-SP element. Any theory aiming at reproducing quantum predictions for a certain family of experiments (described below) can just thus contain SP DFs.

\begin{lemma}
\label{maximal}
Let $D$ be a decoherence functional violating strong positivity, i.e., there exists $\ket{v}$ with $\bra{v}D\ket{v}<0$. Then, there exists a quantum decoherence functional $D'$ describing an experience with two measurement settings such that $D\otimes D'$ is not a decoherence functional.
\end{lemma}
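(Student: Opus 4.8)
The plan is to exploit the failure of strong positivity of $D$ directly: we are handed a vector $\ket{v}$ with $\bra{v}D\ket{v}<0$, and we want to manufacture a quantum DF $D'$ on a small history space (two measurement settings, dichotomic outcomes suffices) together with a $\{0,1\}$-valued vector $\ket{V}$ on the composite history space such that $\bra{V}D\otimes D'\ket{V}<0$. The obstruction is that the positivity axiom only tests $D\otimes D'$ against $\{0,1\}$-vectors, whereas $\ket{v}$ is a general complex vector; so we cannot simply tensor $\ket{v}$ with something. The key idea is to use the extra system $2$ to ``synthesize'' the complex/real coefficients of $\ket{v}$ out of genuine $\{0,1\}$-vectors on the joint space, reading off the components of $\ket{v}$ from the structure of a cleverly chosen quantum DF $D'$.

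First I would reduce to the case where $\ket{v}$ has only finitely many nonzero components, say $\ket{v}=\sum_{j=1}^m v_j\ket{a_j}$ with distinct histories $a_j\in\Omega$, and by rescaling assume the $v_j$ are, say, algebraic or at least that we only need a rational (even dyadic-rational) approximation, since $\bra{v}D\ket{v}<0$ is an open condition. Next I would build $D'$ as a quantum DF associated with a behavior on a $2$-setting Bell-type scenario via the prescription in eq.~\eqref{eq:quantum}: choosing a state $\rho$ and projectors appropriately, the matrix elements $\bra{A_1}D'\ket{A_2}=\tr\{E\rho F^\dagger\}$ can be made to realize, up to normalization, essentially arbitrary small Gram-like matrices — in particular a rank-one matrix $\propto \ket{w}\bra{w}$ where the components of $\ket{w}$ encode the $v_j$. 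The point is that $D\otimes D'$ then contains, as a ``block'' obtained by restricting to suitable $\{0,1\}$-vectors $\ket{V}$ of the form $\ket{V}=\sum_j \ket{a_j}\otimes\ket{B_j}$ (with $\ket{B_j}$ indicator vectors of subsets of $\Omega_2$ chosen so that $\bra{B_j}D'\ket{B_k}$ reproduces $v_j^* v_k$ up to a positive constant), the quantity $\sum_{j,k} \overline{v_j}\, v_k\, D(\{a_j\}|\{a_k\}) = \bra{v}D\ket{v} < 0$.

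Concretely, the main technical step is the construction of $D'$ and the subsets $B_j\subseteq\Omega_2$: I would take $\Omega_2$ to be the history space of a single two-setting, $d$-outcome experiment (so histories are pairs $(\alpha,\beta)$), use the Bell-partition decoherence relations to pin down many entries of $D'$, and then choose a quantum model — a state and commuting projective measurements, or even just a classical/diagonal-plus-coherence model — engineered so that a distinguished family of indicator vectors $\ket{B_j}$ has inner products $\bra{B_j}D'\ket{B_k}$ proportional to the desired Hermitian rank-one pattern $\overline{v_j}v_k$. Handling complex $v_j$ requires that $D'$ genuinely have complex off-diagonal entries, which is fine since quantum DFs do; handling the sign structure requires the $\ket{B_j}$ to be honest $0/1$ vectors, so negative real parts of $\overline{v_j}v_k$ must come from the phases in $D'$, not from signed coefficients. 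Once this block is exhibited, $\bra{V}D\otimes D'\ket{V}$ equals (a positive multiple of) $\bra{v}D\ket{v}$, which is negative, so $D\otimes D'$ violates positivity and hence is not a DF.

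I expect the hard part to be exactly this encoding lemma: showing that a \emph{quantum} DF with only two measurement settings is flexible enough to realize the needed rank-one coherence pattern among indicator vectors. I would attack it by first proving a clean auxiliary statement — given any vector $\ket{w}\in\C^m$, there is a two-setting quantum DF $D'$ and subsets $B_1,\dots,B_m$ of its history space with $\bra{B_j}D'\ket{B_k}=c\,\overline{w_j}w_k$ for some $c>0$ — perhaps by taking the quantum model to be a single qudit prepared in a superposition whose amplitudes are the $w_j$ and letting the two ``settings'' be a trivial/refinement pair that isolates each basis history; the second setting and the associated decoherence constraints are then used only to certify that the resulting matrix is a legitimate DF. All remaining steps (finite truncation, rational approximation, assembling $\ket{V}$, and the final inequality) are routine bookkeeping on top of this.
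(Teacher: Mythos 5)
Your overall strategy coincides with the paper's: reduce the problem to an ``encoding lemma'' stating that for any $\ket{v}\in\C^m$ there is a two-setting quantum DF $D'$ and subsets $B_1,\dots,B_m$ of its history space with $\bra{B_j}D'\ket{B_k}=c\,\overline{v_j}v_k$, $c>0$, and then test $D\otimes D'$ against the $\{0,1\}$ vector $\ket{V}=\sum_j\ket{a_j}\otimes\ket{B_j}$, which gives $\bra{V}D\otimes D'\ket{V}=c\bra{v}D\ket{v}<0$. This is exactly what the paper does, with singleton subsets $B_j=\{(j,0)\}$ and the quantum DF $D_{\ket{v^*}}$ of a qudit prepared in $\ket{v^*}$ and subjected to the computational-basis measurement followed by the two-outcome measurement $\{F_0,F_1\}$ with $F_0=\frac1m\sum_{j,k}\ket{j}\bra{k}$: the $b=0$ block of that DF is $\frac1m\proj{v^*}$, i.e.\ your rank-one pattern with $c=1/m$ (the conjugate state takes care of the conjugation convention). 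Your preliminary reductions (finite truncation, rational approximation) are unnecessary, since the history spaces are finite and the argument works with $\ket{v}$ exactly, but they are harmless.

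The genuine gap is in your proposed proof of the encoding lemma itself, which is the heart of the matter. You suggest a qudit prepared in a superposition with amplitudes $w_j$ and two settings forming a ``trivial/refinement pair that isolates each basis history'', with the second setting ``used only to certify that the resulting matrix is a legitimate DF''. This cannot work: if the only nontrivial measurement is the complete measurement in the basis $\{\ket{a}\}$, the corresponding quantum DF is $D'(a|a')=\tr\{\proj{a}\,\rho\,\proj{a'}\}=\bra{a}\rho\ket{a}\,\delta_{aa'}$, i.e.\ diagonal (as it must be, since the measured partition decoheres), and adding a trivial setting does not change this. Hence no choice of indicator vectors $\ket{B_j}$ can produce off-diagonal entries, let alone the complex phases in $\overline{v_j}v_k$ needed to emulate a general $\ket{v}$. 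The second setting is not a certification device but the engine of the construction: one needs a second measurement that does not commute with the basis measurement—in the paper, the projection onto the uniform superposition versus its complement—so that, conditioned on its outcome $0$, the DF block between the singleton histories $(j,0)$ and $(k,0)$ equals $\frac1m\,v^*_j\,\overline{v^*_k}=\frac1m\,\overline{v_j}v_k$. With that replacement your assembly of $\ket{V}$ and the final inequality go through verbatim.
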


\begin{proof}

\noindent In order to prove the theorem, we need to define the following family of DFs:

\begin{defin}
Let $\ket{v}\in \C^m$ be a normalized vector, i.e., $\braket{v}{v}=1$. Then, $D_{\ket{v}}\in B(\C^m\otimes\C^2)$ will denote the quantum decoherence functional given by

\be
D_{\ket{v}}(a,b|a',b')\equiv \bra{v}E_{a'}F_{b'}F_{b}E_{a}\ket{v},
\ee,

\noindent where $E_a=\proj{a}$, for $a=0,...,m-1$ are 1-rank projectors onto the computational basis of $\C^m$, and $F_0,F_1$ correspond to the projectors $F_0=\frac{1}{m}\sum_{j,k}\ket{j}\bra{k}$, $F_1=\id_m-F_0$.
\end{defin}

By straightforward calculation, one can verify that

\begin{align}
&D_{\ket{v}}=\frac{1}{m}\proj{v}\otimes \proj{0} +\nonumber\\
&\left(\sum_{a=0}^{m-1}|\braket{v}{a}|^2\proj{a}-\frac{1}{m}\proj{v}\right)\otimes \proj{1}.
\end{align}

\noindent Defining $\ket{w}\equiv\sum_{a=0}^{m-1}\ket{a}_A\ket{a,0}_B$, it hence follows that

\be
\tr_B\{\proj{w}_{AB}(\id_A\otimes D_{\ket{v}})\}=\frac{1}{m}\proj{v^*}.
\ee

\noindent Note that, expressed in the computational basis, $\ket{w}$ has just zeros and ones. Therefore it can be identified with a subset of the joint set of histories $\Omega_{AB}$.

Now we are ready to prove the maximality result: let $D$ be a weakly positive DF, with a set of histories $\Omega$ with cardinality $m$ and such that $\bra{v}D\ket{v}<0$, for some normalized vector $\ket{v}\in\C^m$. Then,

\be
\bra{w}D\otimes D_{v^*}\ket{w}=\frac{1}{m} \bra{v}D\ket{v}<0.
\ee 
\noindent The positivity axiom is thus violated by the composition of $D$ with $D_{v^*}$.
\end{proof}

\vspace{10pt}
The last lemma gives further support to the adoption of strong positivity as an axiom for the theory of DFs. However, it does not close the matter completely: that $\S$ is a maximal set closed under composition does not necessarily mean that it is the \emph{only} maximal set. In principle, there could exist other sets of DFs closed under composition containing elements with negative eigenvalues. Each of these other maximal sets would give rise to a consistent theory of DFs (although unable, in the light of Lemma \ref{maximal}, to reproduce all the predictions of quantum mechanics).

The main obstacle in showing the existence of other maximal sets is the difficulty of identifying $s\times s$ matrices $D\not\geq 0$ all of whose $n$-tensor products $S=D^{\otimes n}$ satisfy 

\be
\bra{u}S\ket{u}\geq 0,
\label{one-positivity}
\ee

\noindent for all vectors $\ket{u}\in\{0,1\}^{sn}$. An obvious choice is taking $D$ to have only non-negative entries. Actually, the set $\P$ of \emph{hermitian matrices} with non-negative coefficients can be shown to be maximal under composition, in the sense that they and their tensor products satisfy relation (\ref{one-positivity}), and, for any $s\times s$ matrix $M\not\in \P$, there exists a $2\times 2$ matrix $D$ in $\P$ such that $M\otimes D$ violates (\ref{one-positivity}). As we will next see, though, this choice does not lead to very interesting DFs. 

Consider a set $\Omega$ of histories labeled by the values of $n$ properties, i.e., each history is of the form $a\equiv(a_1,a_2,...,a_n)$. Now let $D$ be the matrix representation of a DF with non-negative entries, and suppose that there exists a non-zero off-diagonal element $D(b|c)\not=0$, with $b\not=c$. Since $b\not=c$, let $k$ be any index such that $b_k\not= c_k$ and consider the partition of $\Omega$ given by the sets $A_{a}\equiv \{(a_1,...,a_{k-1},a,a_{k+1},...,a_n):a_1,...,a_{k-1},a_{k+1},...,a_n\}$. Then, the fact that all entries of $D$ are non-negative implies that $D(A_{b_k}|A_{c_k})\not=0$. That is, this partition does not decohere, and consequently there exists no physical procedure to determine the value of property $a_k$. That we cannot measure one of the properties which we used to define our set of histories is clearly nonsensical (or, at the very least, extremely undesirable), so we must conclude that $D$ must be a diagonal matrix. Hence $D$ must be diagonal or classical and thus strongly positive.

\section{Conclusion}
\label{conclusion}
In this paper we have proposed a natural rule for the composition of DFs and studied its consequences within the axiomatization of QMT. We have shown that this extra rule predicts the existence of families of DFs which, despite being $n$-fold compatible, cannot co-exist with $n+1$ copies of themselves. Also, we have proven that, under this composition law, the set $\S$ of strongly positive DFs is maximal, in the sense that it cannot be enlarged without losing the property of being closed under composition.

It remains to find out whether $\S$ is actually the only maximal set, or, on the contrary, there exist other sets not contained in $\S$ but nonetheless closed under composition. Alternatively, one could look for a new composition rule for which $\S$ is not maximal. We suspect that both problems are similarly involved.

\vspace{10pt}
\noindent\emph{Acknowledgements}
M.N. acknowledges the FQXi grant `Towards an almost quantum physical theory' and useful discussions with A. Ac\'in and J. Henson.
P.B. would like to thank F. Dowker and J. Henson for supervising parts of this research as well as the Quantum Information group at University Aut\'{o}noma Barcelona for their hospitality during a research visit for this project.

\end{document}